\newtheorem{proposition}{Proposition}
\newtheorem{theorem}{Theorem}
\title{\LARGE \bf
Optimally Designing Cybersecurity Insurance Contracts to Encourage the Sharing of Medical Data
}
\author{Yoon Lee and Anil Aswani
\thanks{This material is based upon work partially supported by the National Science Foundation under Grant CMMI-1847666, and partially supported by the UC Berkeley Center for Long-Term Cybersecurity.}
\thanks{Y. Lee and A. Aswani are with the Department of Industrial Engineering and Operations Research,
        University of California, Berkeley, CA 94720, USA
        {\tt\small yllee@berkeley.edu, aaswani@berkeley.edu}}%
}
\begin{document}

\maketitle
\thispagestyle{empty}
\pagestyle{empty}

\begin{abstract}

Though the sharing of medical data has the potential to lead to breakthroughs in health care, the sharing process itself exposes patients and health care providers to various risks. Patients face risks due to the possible loss in privacy or livelihood that can occur when medical data is stolen or used in non-permitted ways, whereas health care providers face risks due to the associated liability. For medical data, these risks persist even after anonymizing/deidentifying, according to the standards defined in existing legislation, the data sets prior to sharing, because shared medical data can often be deanonymized/reidentified using advanced artificial intelligence and machine learning methodologies. As a result, health care providers are hesitant to share medical data. One possible solution to encourage health care providers to responsibly share data is through the use of cybersecurity insurance contracts. This paper studies the problem of designing optimal cybersecurity insurance contracts, with the goal of encouraging the sharing of the medical data. We use a principal-agent model with moral hazard to model various scenarios, derive the optimal contract, discuss its implications, and perform numerical case studies. In particular, we consider two scenarios: the first scenario is where a health care provider is selling medical data to a technology firm who is developing an artificial intelligence algorithm using the shared data. The second scenario is where a group of health care providers share health data amongst themselves for the purpose of furthering medical research using the aggregated medical data.

\end{abstract}

\section{Introduction}

The rapid development of new artificial intelligence algorithms for health care has the potential to lead to an era of computational precision health \cite{roski2014, lee2017, mintz2017behavioral, zhou2018evaluating, cdc2018, aha2019, hulsen2020}. The development of these algorithms requires access to large sets of medical data. Nonetheless, the sharing of such medical data poses risks to patients due to the possible loss in privacy or livelihood that can occur when medical data is stolen or used in non-permitted ways. New ideas for the cybersecurity of medical data are needed to ensure that these new advances can continue to be developed.

\subsection{Privacy Risks from Sharing Medical Data}

A unique aspect of medical data is that even when it has been anonymized/deidentified \cite{kanonymity,ldiversity,tcloseness,dprivacy}  (in accordance with legislation like HIPAA \cite{hipaa} or GDPR \cite{gdpr}) prior to sharing, the data can often be deanonymized/reidentified \cite{narayanan2008, ohm2010, datta2012, oxford2022}. Examples include  deanonymization of a Massachusetts hospital database by joining it with a public voter database \cite{sweeney1997} and reidentification of a physical activity data set from the National Health and Nutrition Examination Survey (NHANES) using standard machine learning \cite{na2018}. 

In addition, a recent study has revealed that more than two-thirds of hospital data breaches include sensitive demographic and financial information that could lead not only to fraud and identity theft but also to discrimination and violation of fundamental rights \cite{jiang2020}. This highlights the necessity of developing approaches to safeguard patients and health care providers against cybersecurity threats.

\subsection{Cybersecurity of Medical Data}

The above described privacy risks deter health care providers from sharing their data \cite{panhuis2014, pisani2010}. One possible approach to mitigating some of the risks with sharing health data is through the design of cybersecurity insurance contracts. For instance, cybersecurity insurance can be used to partially compensate for the costs involved with recovery from a cyber-related security breach or similar incidents \cite{marotta2017}. 

A growing literature studies cybersecurity insurance. For instance, \cite{kunreuther2003, ogut2005, bolot2008} focus on the interdependent security problem to verify whether firms have adequate incentives to invest in protection against a risk whose magnitude depends on the action of others. The work in \cite{bohme2005, bohme2006} introduces new models and measures for the correlation of cyber-risks within and across independent firms, while \cite{shetty2010} investigates the issue of information asymmetries, namely in the form of moral hazard, when cyber-insurers cannot observe individual user security levels. The studies \cite{bohme2010, schwartz2014} provide a unifying framework to address the aforementioned hurdles that complicate risk management via cyber-insurance. 

\subsection{Contributions and Outline}

In this paper, we study the problem of incorporating cybersecurity insurance, which has been mainly explored in the setting of interdependent and correlated networks, into the design of contracts that govern the sharing of medical data. Such contracts would not only protect health care providers against losses resulting from a cyber-attack, but have the potential to foster the sharing of medical data.

In Sect. \ref{sec:scenarioa}, we analyze the scenario in which a health care provider sells medical data to a technology firm that uses the data to develop new artificial intelligence algorithms. We provide mathematical models for both parties, formulate a contract design problem in the setting of a principal-agent model with moral hazard \cite{laffont2002}, derive the optimal contract, and discuss insights gained from the optimal contract. In Sect. \ref{sec:scenariob}, we analyze a second scenario in which a group of health care providers forms a consortium to share medical data with each other for the purpose of conducting scientific research and improving patient care. Again, we provide a mathematical model for the health care providers, formulate a contract design problem, derive the optimal contract, and discuss insights gained from the optimal contract.


\section{Scenario A: Health Care Provider Selling Medical Data to Technology Firm}
\label{sec:scenarioa}

The first scenario we study is that of a health care provider selling medical data to a technology firm that is developing artificial inteligence algorithms using the shared data. Here, an important consideration to the health care provider is the quality of cybersecurity that the technology firm uses to protect any medical data they receive. If the firm suffers from a data breach, then the health care provider itself will face liability from those patients whose medical data has been breached. Thus, the health care provider will want to structure their contract with the firm in such a way that the firm is incentivized to invest in the cybersecurity of the medical data. 

In this scenario that we consider, the health care provider has two options available to mitigate the liability risks associated with a data breach. The first is that the health care provider is able to impose a fine or penalty on the firm if the technology firm suffers from a data breach. If this fine or penalty is sufficiently large, it can incentivize the firm to invest in cybersecurity that protects the data. This fine or penalty is in addition to the fee that the firm is charged in return for access to the medical data. The second is that the health care provider is able to purchase cybersecurity insurance from an external insurance agency.  

\subsection{Technology Firm Model}

The financial value to the technology firm of the shared medical data is $V$. This financial value is derived from the firm's ability to use the data to develop new artificial intelligence algorithms for health care, which can be sold to various health care providers. To get this data, the firm must pay the quantity $\phi$ to the health care provider. The technology firm is responsible for securing the data they receive. If the firm suffers from a data breach, they are required by the contract to pay a fine or penalty $t$ to the health care provider. 

The technology firm chooses an investment level $i$ in cybersecurity that protects the data. The firm chooses between a high ($i = 1$) or low ($i = 0$) level of investment. If the firm chooses high investment, then the probability of a breach is $\alpha \in (0,1)$, and the firm spends $\psi$ for this investment level. If the firm chooses low investment, then the probability of a breach is $\gamma\in(0,1)$, and (without loss of generality) the firm has zero expenditure for this investment level. We assume that $\alpha < \gamma$, meaning that a high level of investment strictly \emph{lowers} the probability of a data breach. We assume that the technology firm chooses their investment level by maximizing their expected profit:
\begin{equation}
    i^*(\phi, t) = \arg\max_{i\in\{0,1\}} (1-i)\cdot F^l(\phi,t) + i\cdot F^h(\phi,t)
\end{equation}
where expected profit under a low level of investment is 
\begin{equation}
    F^l(\phi,t) = V - \phi - \gamma\cdot t,
\end{equation}
and the expected profit under a high level of investment is 
\begin{equation}
    F^h(\phi,t) = V - \phi - \psi - \alpha\cdot t.
\end{equation}
Note that the technology firm is \emph{risk neutral} in this model.

\subsection{Health Care Provider Model}

The financial value to the health care provider of their own medical data is $W$. This financial value is derived from the provider's ability to use the data to self-improve the quality of its health care services through improved patient treatment and care delivery processes, as well as through medical research. If the technology firm suffers a data breach, then the health care provider has to spend $L$ to address its various liabilities to the affected patients. Since $t$ is a fine or penalty on the firm in the event of a breach, we assume $t \leq L$. Having $t > L$ is unrealistic because it would mean the healthcare provider profits from a data breach at the firm.

Furthermore, the health care provider can choose to purchase a policy to insure against their liabilities in the event of a breach. Under the assumption of an actuarially fair policy \cite{schwartz2014}, which would be expected to occur when there are a large number of insurers in the insurance marketplace, the health care provider can purchase an insurance policy that pays out $L_c$ under the event of a data breach at the cost of $p L_c$, where $p$ is the probability of a data breach. 

Finally, we assume the health care provider is \emph{risk averse}. This means that if the health care provider earns a financial revenue of $x$, then their utility for that revenue is $U(x)$ for a function $U(\cdot)$ that is strictly increasing and concave. Under the additional assumption that $U(\cdot)$ is differentiable, this risk aversion assumption means that $U'(\cdot) > 0$ and $U''(\cdot) < 0$.

\subsection{Contract Design Problem}

In this scenario, the health care provider faces a contract design problem in which their goal is to pick the purchase price $\phi$, the value of the fine or penalty $t$, and the insurance policy payout $L_c$ so as to maximize their own expected utility. This contract design problem can be written as the following bilevel program:
\begin{equation}
\label{eqn:s1cdp}
    \begin{aligned}
        \max_{\phi,t,L_c}\ & (1-i^*(\phi,t))\cdot H^l(\phi,t,L_c) + i^*(\phi,t)\cdot H^h(\phi,t,L_c)\\
        \mathrm{s.t.}\ & i^*(\phi, t) = \arg\max_{i\in\{0,1\}} (1-i)\cdot F^l(\phi,t) + i\cdot F^h(\phi,t)\\
        &(1-i^*(\phi, t))\cdot F^l(\phi,t) + i^*(\phi, t)\cdot F^h(\phi,t) \geq 0\\
        & \phi \geq 0, \ t\in[0,L], \ L_c \geq 0 
    \end{aligned}
\end{equation}
where we note that the health care provider's expected utility when the technology firm has a low level of investment in cybersecurity of the health data is given by
\begin{multline}
    H^l(\phi,t,L_c) = \gamma\cdot U(W + \phi - \gamma\cdot L_c - L + t + L_c) + \\
    (1-\gamma)\cdot U(W + \phi - \gamma\cdot L_c),
\end{multline}
the health care provider's expected utility when the technology firm has a high level of investment is 
\begin{multline}
    H^h(\phi,t,L_c) = \alpha\cdot U(W + \phi - \alpha\cdot L_c - L + t + L_c) + \\
    (1-\alpha)\cdot U(W + \phi - \alpha\cdot L_c),
\end{multline}
and the second constraint in (\ref{eqn:s1cdp}) is a \emph{participation constraint} that ensures the purchase cost $\phi$ and fine or penalty $t$ are such that the technology firm does not expect to lose money.

\subsection{Optimal Contract}

We next proceed to solve the contract design problem (\ref{eqn:s1cdp}) through a series of steps. Let $\phi^*,t^*,L_c^*$ denote the optimal contract, meaning they maximize (\ref{eqn:s1cdp}). We first characterize the optimal insurance coverage pay out.

\begin{proposition}
We have that $L_c^* = L - t^*$.
\end{proposition}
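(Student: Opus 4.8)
The plan is to exploit the fact that the firm's optimal investment $i^*(\phi,t)$, as well as the participation constraint, depend only on $\phi$ and $t$ and never on the insurance payout $L_c$. Consequently, once $\phi^*$ and $t^*$ are fixed at their optimal values, the entire bilevel problem (\ref{eqn:s1cdp}), viewed as a function of $L_c$ alone, collapses to a single unconstrained-in-spirit maximization over $L_c \geq 0$ of the active objective, namely $H^l(\phi^*,t^*,L_c)$ when $i^*=0$ and $H^h(\phi^*,t^*,L_c)$ when $i^*=1$. I would therefore argue that $L_c^*$ must be a maximizer of this reduced one-dimensional problem, and then solve that problem explicitly.

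First I would treat the high-investment case $i^*=1$, where the objective is $H^h(\phi^*,t^*,L_c)$. Writing the breach-state wealth as $a(L_c) = W + \phi^* + (1-\alpha)L_c - L + t^*$ and the no-breach-state wealth as $b(L_c) = W + \phi^* - \alpha L_c$, both arguments are affine in $L_c$, so $H^h$ is a nonnegative combination of concave functions composed with affine maps and is therefore concave in $L_c$. Differentiating and factoring out the common positive coefficient $\alpha(1-\alpha)$, the first-order condition reduces to $U'(a(L_c)) = U'(b(L_c))$.

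The key step is then to invoke strict concavity: since $U'' < 0$, the derivative $U'$ is strictly decreasing and hence injective, so $U'(a) = U'(b)$ forces $a(L_c) = b(L_c)$. Equating the two wealth expressions and cancelling the common $W + \phi^*$ gives $(1-\alpha)L_c - L + t^* = -\alpha L_c$, i.e.\ $L_c = L - t^*$. The low-investment case $i^*=0$ is identical with $\gamma$ in place of $\alpha$, and it yields the same answer precisely because the breach probability cancels out of the first-order condition; this is exactly the classical full-insurance result, in which a risk-averse buyer facing an actuarially fair policy equalizes wealth across the breach and no-breach states.

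Finally I would verify feasibility and optimality. Because $t^* \in [0,L]$, we have $L - t^* \geq 0$, so the candidate satisfies the constraint $L_c \geq 0$ and is interior whenever $t^* < L$; concavity of the reduced objective then certifies that this stationary point is the global maximizer, giving $L_c^* = L - t^*$. I do not anticipate a serious technical obstacle here. The one point requiring care is the legitimacy of decoupling $L_c$ from the inner argmax and the participation constraint, which is what makes the reduction to a one-dimensional concave problem valid; everything after that is the standard full-insurance computation.
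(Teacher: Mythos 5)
Your proposal is correct and follows essentially the same route as the paper's proof: fix the investment case, take the first-order condition in $L_c$, and use strict concavity of $U$ (injectivity of $U'$) to force equal wealth in the breach and no-breach states, yielding $L_c^* = L - t^*$. Your additions — explicitly justifying the decoupling of $L_c$ from the inner argmax and participation constraint, and noting concavity in $L_c$ to certify the stationary point as a global maximizer — are sound refinements of details the paper leaves implicit, not a different argument.
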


\begin{proof}
We first consider the case where $i^*(\phi^*,t^*) = 0$. In this case, the objective function of (\ref{eqn:s1cdp}) is $H^l(\phi,t,L_c)$, and the second constraint in (\ref{eqn:s1cdp}) is $F^l(\phi,t) \geq 0$. Next, note that the first-order stationarity condition is 
\begin{multline}
    0 = \partial_{L_c} H^l(\phi,t,L_c) = \\
    \gamma\cdot (1-\gamma)\cdot U'(W + \phi - \gamma\cdot L_c - L + t + L_c) +\\
    -\gamma\cdot (1-\gamma)\cdot U'(W + \phi - \gamma\cdot L_c).
\end{multline}
Since we assumed that $U''(\cdot) < 0$, this means the above is satisfied when $W + \phi - \gamma\cdot L_c - L + t + L_c = W + \phi - \gamma\cdot L_c$. Hence at optimality we have $L_c^* = L - t^*$, which is feasible since $t^* < L$ implies $L_c^* \geq 0$. The proof for the case $i^*(\phi^*,t^*) = 1$ proceeds almost identically.
\end{proof}

The implication of the above result is that we can rewrite the contract design problem as
\begin{equation}
\label{eqn:s1cdp_2}
    \begin{aligned}
        \max_{\phi,t}\ & (1-i^*(\phi,t))\cdot H^l(\phi,t) + i^*(\phi,t)\cdot H^h(\phi,t)\\
        \mathrm{s.t.}\ & i^*(\phi, t) = \arg\max_{i\in\{0,1\}} (1-i)\cdot F^l(\phi,t) + i\cdot F^h(\phi,t)\\
        &(1-i^*(\phi, t))\cdot F^l(\phi,t) + i^*(\phi, t)\cdot F^h(\phi,t) \geq 0\\
        & \phi \geq 0, \ t\in[0,L]
    \end{aligned}
\end{equation}
where
\begin{equation}
    H^l(\phi,t) := H^l(\phi,t,L-t) = U(W + \phi - \gamma\cdot(L - t)),
\end{equation}
and 
\begin{equation}
    H^h(\phi,t) := H^h(\phi,t,L-t) = U(W + \phi - \alpha\cdot(L - t)).
\end{equation}
We next use the above reformulation to characterize the optimal purchase price and fine or penalty amount.

\begin{proposition}
\label{prop:tstar}
If $i^*(\phi^*,t^*) = 0$, then an optimal choice solves the optimization problem
\begin{equation}
\label{eqn:tstar1}
\begin{aligned}
    \max_{\phi,t}\ &\phi + \gamma\cdot t\\
    \mathrm{s.t. }\ & \phi + \gamma\cdot t \leq V \\
    &(\gamma-\alpha)\cdot t < \psi\\
    &\phi \geq 0, \ t\in[0,L]
\end{aligned}
\end{equation}
If $i^*(\phi^*,t^*) = 1$, then an optimal choice solves the optimization problem
\begin{equation}
\label{eqn:tstar2}
\begin{aligned}
    \max_{\phi,t}\ &\phi + \alpha\cdot t\\
    \mathrm{s.t. }\ & \phi + \alpha\cdot t \leq V-\psi \\
    &(\gamma-\alpha)\cdot t \geq \psi\\
    &\phi \geq 0, \ t\in[0,L]
\end{aligned}
\end{equation}
\end{proposition}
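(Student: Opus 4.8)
The plan is to eliminate the inner $\arg\max$ in (\ref{eqn:s1cdp_2}) by computing the technology firm's best response in closed form, partition the health care provider's feasible set according to the induced investment level, and then exploit the strict monotonicity of $U(\cdot)$ to reduce the objective to a linear one on each piece.

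First I would determine $i^*(\phi,t)$ explicitly. Comparing the two branches of the inner maximization gives
\[
F^h(\phi,t) - F^l(\phi,t) = (\gamma-\alpha)\cdot t - \psi,
\]
so the firm weakly prefers high investment exactly when $(\gamma-\alpha)\cdot t \geq \psi$. Adopting the standard convention that an indifferent agent breaks ties in favor of the principal---which here selects high investment, since $H^h(\phi,t)\geq H^l(\phi,t)$ whenever $\alpha<\gamma$ and $t\leq L$---this yields $i^*(\phi,t)=1$ if and only if $(\gamma-\alpha)\cdot t \geq \psi$, and $i^*(\phi,t)=0$ otherwise. These are precisely the incentive constraints $(\gamma-\alpha)\cdot t < \psi$ and $(\gamma-\alpha)\cdot t \geq \psi$ appearing in (\ref{eqn:tstar1}) and (\ref{eqn:tstar2}).

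Next I would rewrite (\ref{eqn:s1cdp_2}) piecewise using this best-response map. On the region where $(\gamma-\alpha)\cdot t < \psi$ we have $i^*(\phi,t)=0$, so the objective collapses to $H^l(\phi,t)$ and the participation constraint collapses to $F^l(\phi,t)\geq 0$, that is $\phi+\gamma\cdot t\leq V$; on the complementary region $(\gamma-\alpha)\cdot t \geq \psi$ the objective collapses to $H^h(\phi,t)$ and the participation constraint collapses to $F^h(\phi,t)\geq 0$, that is $\phi+\alpha\cdot t\leq V-\psi$. Since $U'(\cdot)>0$, maximizing $H^l(\phi,t)=U(W+\phi-\gamma\cdot L+\gamma\cdot t)$ is equivalent to maximizing its argument, and discarding the additive constant $W-\gamma\cdot L$ leaves the linear objective $\phi+\gamma\cdot t$; the identical argument applied to $H^h(\phi,t)$ produces $\phi+\alpha\cdot t$. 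Matching objectives, participation constraints, incentive constraints, and the box constraints $\phi\geq 0$ and $t\in[0,L]$ then shows that the restriction of (\ref{eqn:s1cdp_2}) to the first region is exactly (\ref{eqn:tstar1}) and its restriction to the second region is exactly (\ref{eqn:tstar2}). Consequently, if the global optimum $(\phi^*,t^*)$ induces $i^*(\phi^*,t^*)=0$ it lies in and is optimal over the first region, hence solves (\ref{eqn:tstar1}); the case $i^*(\phi^*,t^*)=1$ is symmetric.

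The main obstacle I anticipate is the careful handling of the bilevel structure rather than any hard analysis. The inner $\arg\max$ is set-valued on the boundary $(\gamma-\alpha)\cdot t=\psi$, so one must fix and justify the tie-breaking rule that assigns this boundary to the high-investment region, ensuring the two reduced problems together cover the entire feasible set of (\ref{eqn:s1cdp_2}) consistently with the stated strict and non-strict inequalities. Once the best-response map is pinned down, the remainder is a routine substitution together with the monotone reparametrization of the objective; notably, only $U'(\cdot)>0$ is used here, and the concavity of $U(\cdot)$ plays no role in this particular step.
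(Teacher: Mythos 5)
Your proof is correct and follows essentially the same route as the paper's: case analysis on the induced investment level, identification of the incentive constraint $(\gamma-\alpha)\cdot t < \psi$ versus $(\gamma-\alpha)\cdot t \geq \psi$ and the corresponding participation constraint, and reduction of $H^l$ (resp.\ $H^h$) to the linear objective $\phi+\gamma\cdot t$ (resp.\ $\phi+\alpha\cdot t$) via $U'(\cdot)>0$. Your explicit treatment of the tie-breaking on the boundary $(\gamma-\alpha)\cdot t=\psi$ (resolved in favor of high investment, consistent with the principal's preference since $H^h \geq H^l$) is a detail the paper leaves implicit, but it does not change the substance of the argument.
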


\begin{proof}
We first consider the case where $i^*(\phi^*,t^*) = 0$. In this case, we have that $F^l(\phi,t) > F^h(\phi,t)$ (which is equivalent to $\psi  > (\gamma-\alpha)\cdot t$), that the second constraint of (\ref{eqn:s1cdp_2}) is 
\begin{equation}
    F^l(\phi,t)  = V - \phi - \gamma\cdot t\geq 0,
\end{equation} 
(which is equivalent to $\phi +\gamma\cdot t \leq V)$, and that the objective function of (\ref{eqn:s1cdp_2}) is $H^l(\phi,t)$. Since $U'(\cdot) > 0$, this means $H^l(\phi,t)$ is strictly increasing in $\phi +\gamma\cdot t$. The above observations imply that (\ref{eqn:tstar1}) provides an optimal choice. The proof for the case $i^*(\phi^*,t^*) = 1$ is nearly identical.
\end{proof}

We conclude by using the above characterization to finish deriving an optimal contract for this scenario.

\begin{theorem}
\label{thm:con1}
If $\psi > (\gamma-\alpha)\cdot L$ or $\psi > (\gamma-\alpha)\cdot V/\gamma$, then an optimal contract is $(\phi^*,t^*,L_c^*) = (V, 0, L)$. If $\psi \leq (\gamma-\alpha)\cdot L$ and $\psi \leq (\gamma-\alpha)\cdot V/\gamma$, then an optimal contract is $(\phi^*,t^*,L_c^*) = (V - \gamma/(\gamma-\alpha)\psi, \psi/(\gamma-\alpha), L - \psi/(\gamma-\alpha))$.
\end{theorem}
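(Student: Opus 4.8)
The plan is to solve the two linear subproblems supplied by Proposition \ref{prop:tstar}, evaluate the health care provider's utility attained in each, and then select whichever investment regime yields the larger utility; the relation $L_c^* = L - t^*$ from the first proposition then recovers the coverage level. First I would handle the low-investment subproblem (\ref{eqn:tstar1}). Its feasible set is always nonempty---for instance $(\phi,t)=(V,0)$ satisfies every constraint, since $\psi>0$---and the objective $\phi+\gamma t$ is capped at $V$ by the participation constraint, so the optimal value is exactly $V$, attained at $(\phi,t)=(V,0)$. Because $U$ is strictly increasing, this regime delivers provider utility $U(W+V-\gamma L)$.

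Next I would analyze the high-investment subproblem (\ref{eqn:tstar2}). The crucial step, and the one I expect to require the most care, is characterizing exactly when its feasible set is nonempty. Combining $\phi\ge 0$ with $\phi+\alpha t\le V-\psi$ and the incentive constraint $t\ge \psi/(\gamma-\alpha)$ forces $0\le \phi\le V-\psi-\alpha t\le V-\gamma\psi/(\gamma-\alpha)$, while $t\le L$ forces $\psi/(\gamma-\alpha)\le L$. Hence the feasible set is nonempty if and only if both $\psi\le(\gamma-\alpha)L$ and $\psi\le(\gamma-\alpha)V/\gamma$ hold; otherwise (\ref{eqn:tstar2}) is outright infeasible and the provider is confined to the low-investment regime. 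When it is feasible, the objective $\phi+\alpha t$ is capped at $V-\psi$, and this cap is attained by taking the smallest admissible penalty $t^*=\psi/(\gamma-\alpha)$ together with $\phi^*=V-\gamma\psi/(\gamma-\alpha)\ge 0$, giving provider utility $U(W+V-\psi-\alpha L)$.

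Finally I would compare the two regimes. When the high-investment subproblem is feasible, monotonicity of $U$ reduces the comparison of $U(W+V-\psi-\alpha L)$ against $U(W+V-\gamma L)$ to comparing $(\gamma-\alpha)L$ against $\psi$; the feasibility condition $\psi\le(\gamma-\alpha)L$ then guarantees the high-investment utility is at least as large, so this regime is optimal and yields the second contract of the theorem, with $L_c^*=L-t^*=L-\psi/(\gamma-\alpha)$ by the first proposition. Conversely, if $\psi>(\gamma-\alpha)L$ or $\psi>(\gamma-\alpha)V/\gamma$, the high-investment subproblem is infeasible, only the low-investment regime survives, and the optimal contract is $(\phi^*,t^*,L_c^*)=(V,0,L)$, again invoking $L_c^*=L-t^*$. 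A consistency point worth verifying explicitly is that the high-investment feasibility threshold $\psi\le(\gamma-\alpha)L$ coincides precisely with the threshold at which high investment first becomes utility-improving, so the dichotomy loses no case and double-counts none at the boundary.
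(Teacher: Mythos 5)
Your proof is correct, and it rests on the same skeleton as the paper's---Proposition \ref{prop:tstar}'s two linear programs, solved exactly as you solve them, with $L_c^*$ recovered from the first proposition's identity $L_c^* = L - t^*$---but your dispatch of the case analysis is genuinely different. The paper treats the two branches of the ``or'' hypothesis by separate arguments: for $\psi > (\gamma-\alpha)\cdot V/\gamma$ it invokes infeasibility of (\ref{eqn:tstar2}), while for $\psi > (\gamma-\alpha)\cdot L$ it compares the best attainable utilities, $U(W+V-\gamma\cdot L)$ versus $U(W+V-\psi-\alpha\cdot L)$, to conclude that $i^*=0$ is induced at the optimum. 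Your key lemma is instead an exact feasibility characterization: (\ref{eqn:tstar2}) is feasible if and only if both $\psi\le(\gamma-\alpha)\cdot L$ and $\psi\le(\gamma-\alpha)\cdot V/\gamma$, because $t\le L$ clashes with $t\ge\psi/(\gamma-\alpha)$ in the one case, and $\phi\ge 0$ clashes with $\phi\le V-\gamma\psi/(\gamma-\alpha)$ in the other. Under the ``or'' hypothesis the high-investment regime is therefore simply unavailable, so no utility comparison is needed there; the only comparison you perform is in the ``and'' case, where feasibility ($\psi\le(\gamma-\alpha)\cdot L$) automatically makes high investment weakly better. This buys two things: a uniform treatment of both ``or'' branches, and an explicit proof of the consistency point the paper leaves implicit, namely that the theorem's thresholds coincide exactly with the feasibility thresholds of (\ref{eqn:tstar2}). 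Indeed, your argument reveals that the paper's utility comparison in the $\psi>(\gamma-\alpha)\cdot L$ case is logically redundant, since the regime it compares against is empty there. What the paper's version buys in exchange is the economic reading: low investment wins in that case because the cost $\psi$ of inducing investment exceeds the reduction $(\gamma-\alpha)\cdot L$ in expected liability---an interpretation worth retaining even though the formal proof does not require it.
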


\begin{proof}
If $\psi > (\gamma-\alpha)\cdot L$, then this means $W + V -\gamma\cdot L > W + V - \psi - \alpha\cdot L$, which implies $H^l > H^h$ since $U'(\cdot) > 0$. Hence, the optimal choice is $i^*(\phi,t) = 0$. This means the choice $t^* = 0$ and $\phi^* = V$ is optimal by Proposition \ref{prop:tstar}. If $\psi > (\gamma-\alpha)\cdot V/\gamma$, then (\ref{eqn:tstar2}) is infeasible. This means applying Proposition \ref{prop:tstar} tells us that the optimal choice is $i^*(\phi,t) = 0$, and that we can again choose $t^* = 0$ and $\phi^* = V$. Finally, if $\psi \leq (\gamma-\alpha)\cdot L$ and $\psi \leq (\gamma-\alpha)\cdot V/\gamma$, then this means $W + V -\gamma\cdot L \leq W + V - \psi - \alpha\cdot L$, which implies $H^l \leq H^h$ since $U'(\cdot) > 0$. Moreover, the condition $\psi \leq (\gamma-\alpha)\cdot V/\gamma$ implies that (\ref{eqn:tstar2}) is feasible. This means the choice $t^* = \psi/(\gamma-\alpha)$ (and note $t^* \leq L$ since $\psi/(\gamma-\alpha) \leq L$ in this case) and $\phi^* = V-\psi - \alpha/(\gamma-\alpha)\psi = V - \gamma/(\gamma-\alpha)\psi$ provides an optimal contract for this case. 
\end{proof}

\subsection{Insights from the Optimal Contract}

Several insights can be gained from the optimal contract in Theorem \ref{thm:con1}. The most interesting insights are related to the conditions that result in a contract where the technology firm makes a high or low investment in cybersecurity: 

When $\psi > (\gamma-\alpha)\cdot L$ or $\psi > (\gamma-\alpha)\cdot V/\gamma$, the optimal contract leads to the technology firm making a low investment in cybersecurity. If $\psi > (\gamma-\alpha)\cdot V/\gamma$, then a high investment $\psi$ by the technology firm in cybersecurity is relatively costly compared to the financial value $V$ to the \emph{technology firm} of the medical data, and the technology firm will not want to make a high investment in cybersecurity. If $\psi > (\gamma-\alpha)\cdot L$, then this means that a a high investment $\psi$ by the technology firm in cybersecurity is relatively costly compared to the liability $L$ of the \emph{health care provider} in the event of a data breach. It is surprising that the optimal contract when $\psi > (\gamma-\alpha)\cdot L$ holds also leads to the technology firm making a low investment in cybersecurity.

Another interesting aspect of the optimal contract when $\psi > (\gamma-\alpha)\cdot L$ or $\psi > (\gamma-\alpha)\cdot V/\gamma$ is that the optimal contract is such that there is no penalty or fine $t^* = 0$ to the technology firm in the event of a data breach. In this case, it is instead optimal to charge as much as possible for the data, meaning it is optimal to charge $\phi^* = V$.

On the other hand, the optimal contract induces the technology firm to invest in cybersecurity only when $\psi \leq (\gamma-\alpha)\cdot L$ and $\psi \leq (\gamma-\alpha)\cdot V/\gamma$. A numerical example of these thresholds are shown in Fig. \ref{fig:fig1}. Here, a high investment $\psi$ by the technology firm in cybersecurity is relatively cheap compared to the financial value $V$ to the \emph{technology firm} of the medical data and relatively small compared to the liability $L$ of the \emph{health care provider} in the event of a data breach.

\begin{figure}
    \centering
    \includegraphics{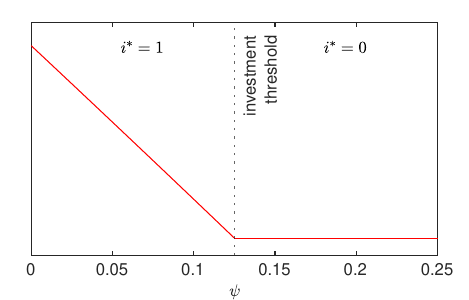}
    \caption{This plot shows the expected utility from an optimal contract as a function of $\psi$. We can observe that there exists a threshold beyond which the cost of investment is relatively high compared to either the financial value of the medical data or the liability in the event of a data breach, and thus it is optimal to induce $i^* = 0$ if $\psi$ is above this threshold value.}
    \label{fig:fig1}
\end{figure}

\section{Scenario B: Consortium of Health Care Providers Sharing Data}
\label{sec:scenariob}

The second scenario we study is that of a group of health care providers sharing medical data amongst themselves. We assume that the cybersecurity level of each health care provider is fixed, and that all health care providers have identical models. Here, the primary consideration is each health care provider's  decision of whether or not to join the consortium. If any single health care provider in the consortium suffers from a data breach, then all the health care providers in the consortium will face liability from those patients whose medical data has been breached. Thus, the health care providers must decide whether any benefits accrued from being in the consortium outweigh the increased risks of data breach due to sharing medical data. 

In this scenario that we consider, each health care provider has two options available to mitigate the liability risks associated with a data breach. The first is that consortium can impose a fine or penalty on the health care provider that suffers from a data breach, which is equally shared by the remaining health care providers. The second is that each health care provider is able to purchase cybersecurity insurance from an external insurance agency.

\subsection{Health Care Provider Model}

The financial value to a health care provider of their own medical data is $W$, and the financial value to a health care provider of the medical data from a consortium of $k$ health care providers is $v(k)\cdot W$, where the function $v(\cdot) > 0$ is strictly increasing and concave with $v(1) = 1$. This financial value is derived from the provider's ability to use the data to self-improve its health care services through improved patient treatment and care delivery processes, as well as through medical research. This model says that more quantity of data gives more financial value, but that there are diminishing financial returns to increasing quantities of data. 

If any consortium member suffers a data breach, then each health care provider in the consoritum has to spend $L$ to address its various liabilities to the affected patients. The probability of a data breach among $k$ health care providers is given by $p(k) \in (0,1)$. We assume that this function $p(\cdot)$ is concave and increases with a sublinear growth rate such that $p(k) < k\cdot p(1)$. Furthermore, the health care provider responsible for the data breach is required to pay a fine or penalty of $(k-1)\cdot t$ that is equally divided among the $(k-1)$ remaining health care providers, where we assume $t \leq L$. Having $t > L$ is unrealistic because it would mean a healthcare provider profits from a data breach elsewhere. 

Each health care provider can choose to purchase a policy to insure against their liabilities in the event of a breach. Under the assumption of an actuarially fair policy \cite{schwartz2014}, each health care provider can purchase an insurance policy that pays out $L_c$ under the event of a data breach at the cost of $p L_c$, where $p$ is the probability of a data breach. 

Finally, we assume each health care provider is \emph{risk averse}. This means that if a health care provider earns a financial revenue of $x$, then their utility for that revenue is $U(x)$ for a function $U(\cdot)$ that is strictly increasing and concave. Under the additional assumption that $U(\cdot)$ is differentiable, this assumption of risk aversion means that $U'(\cdot) > 0$ and $U''(\cdot) < 0$. 

\subsection{Contract Design Problem}

In this scenario, the consortium faces a contract design problem in which their goal is to pick the value of the fine or penalty $t$ and the insurance policy payout $L_c$ so as to encourage participation in the consortium and thus motivate data sharing. This contract design problem can be written as the following:
\begin{equation}
\label{eqn:s2cdp}
    \begin{aligned}
        \max_{s,t,L_c}\ & (1-s)\cdot H^1(L_c) + s\cdot H^k(t,L_c)\\
        \mathrm{s.t.}\         & t\in[0,L], \ L_c \geq 0 
    \end{aligned}
\end{equation}
where we note that a health care provider's expected utility when they do not participate in the consortium is given by 
\begin{multline}\label{eqn:h1lc}
    H^1(L_c) = p(1)\cdot U(W - p(1)\cdot L_c - L + L_c) + \\
    (1-p(1))\cdot U(W - p(1)\cdot L_c),
\end{multline}
and a health care provider's expected utility when they do participate in the consortium is given by
\begin{multline}
    H^k(t,L_c) = \\
    p(1)\cdot U(v(k)\cdot W - p(k)\cdot L_c - L - (k-1)\cdot t + L_c) + \\
    (p(k)-p(1))\cdot U(v(k)\cdot W - p(k)\cdot L_c - L + t + L_c) + \\
    (1-p(k))\cdot U(v(k)\cdot W - p(k)\cdot L_c).
\end{multline}

\subsection{Optimal Contract}

We next proceed to solve the contract design problem (\ref{eqn:s2cdp}) through a series of steps. Let $s^*,t^*,L_c^*$ denote the optimal contract, meaning they maximize (\ref{eqn:s2cdp}). We first characterize the optimal fine or penalty amount.

\begin{proposition}
\label{prop:tstar_2}
We have that $t^* = 0$.
\end{proposition}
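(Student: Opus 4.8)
The plan is to reduce the three-variable problem (\ref{eqn:s2cdp}) to a one-dimensional monotonicity statement in $t$ and then show that this one-dimensional objective is maximized at the left endpoint $t = 0$. First I would observe that the term $H^1(L_c)$ in the objective does not depend on $t$, and that the participation variable satisfies $s \geq 0$ (it is a participation probability/indicator taking values in $[0,1]$). Consequently, for any fixed $s$ and $L_c$ the objective is a nondecreasing affine function of $H^k(t,L_c)$, so it suffices to prove that $t \mapsto H^k(t,L_c)$ is nonincreasing on $[0,L]$ for every fixed $L_c \geq 0$: given any optimal $(s^*,t^*,L_c^*)$, replacing $t^*$ by $0$ cannot then decrease the objective, which shows $t^* = 0$ is (weakly) optimal.

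Next I would differentiate $H^k$ in $t$. Writing the two $t$-dependent utility arguments as
\begin{align}
a_1(t) &= v(k)\cdot W - p(k)\cdot L_c - L + L_c - (k-1)\cdot t,\\
a_2(t) &= v(k)\cdot W - p(k)\cdot L_c - L + L_c + t,
\end{align}
the stationarity computation yields
\begin{equation}
\partial_t H^k(t,L_c) = -(k-1)\cdot p(1)\cdot U'(a_1(t)) + (p(k)-p(1))\cdot U'(a_2(t)).
\end{equation}
The key structural observation is that $a_1(0) = a_2(0)$, so at $t = 0$ the two marginal-utility terms coincide and the derivative collapses to $(p(k) - k\cdot p(1))\cdot U'(a_2(0))$, which is strictly negative by the sublinear-growth assumption $p(k) < k\cdot p(1)$ together with $U'(\cdot) > 0$.

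The main work, and the step I expect to be the obstacle, is controlling the sign of $\partial_t H^k$ for all $t > 0$, not merely at $t = 0$. Here I would invoke concavity: for $t > 0$ the fine pushes the breach-and-penalty argument down and the breach-and-compensation argument up, so $a_1(t) < a_2(t)$, and since $U''(\cdot) < 0$ this gives $U'(a_1(t)) > U'(a_2(t)) > 0$. Because the coefficient $-(k-1)\cdot p(1)$ is nonpositive (and strictly negative for $k \geq 2$), replacing $U'(a_1(t))$ by the smaller $U'(a_2(t))$ only increases the derivative, so
\begin{multline}
\partial_t H^k(t,L_c) < \big(-(k-1)\cdot p(1) + p(k) - p(1)\big)\cdot U'(a_2(t)) \\
= (p(k) - k\cdot p(1))\cdot U'(a_2(t)) < 0,
\end{multline}
where the final inequality again uses $p(k) < k\cdot p(1)$. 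Thus $H^k(\cdot,L_c)$ is strictly decreasing on $[0,L]$, which closes the reduction and forces $t^* = 0$. The delicate point is exactly the interplay between the concavity of $U$ (which pins the two marginal utilities against each other) and the sublinearity of $p$ (which controls the probability weights); neither assumption alone suffices. I would also note the degenerate case $k = 1$ separately, where the penalty term vanishes, $H^k$ is constant in $t$, and $t^* = 0$ holds trivially.
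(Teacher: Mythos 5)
Your proof is correct and takes essentially the same approach as the paper: compute $\partial_t H^k(t,L_c)$, use concavity of $U$ to compare the two marginal-utility terms, and invoke the sublinearity condition $p(k) < k\cdot p(1)$ together with $U'(\cdot)>0$ to conclude the derivative is strictly negative, forcing $t^*=0$. The only cosmetic differences are that you bound the derivative by replacing $U'$ in the penalty term (the paper replaces it in the compensation term, yielding the same conclusion) and that you dispatch the dependence on $s$ via a monotonicity reduction rather than the paper's explicit case split on $s^*=0$ versus $s^*=1$.
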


\begin{proof}
If $s^* = 0$, then the objective function of (\ref{eqn:s2cdp}) is $H^1(L_c)$. This means the objective function value does not depend on $t$, and so any feasible $t$ is optimal. Hence, we can pick $t^* = 0$ in this case. If $s^* = 1$, then the objective function of (\ref{eqn:s2cdp}) is $H^k(t,L_c)$. Now, consider the partial derivative with respect to $t$:
\begin{multline}
    \partial_{t} H^k(t,L_c) = \\
    -(k-1)\cdot p(1)\cdot U'(v(k)\cdot W - p(k)\cdot L_c - L - (k-1)\cdot t + L_c) + \\
    (p(k)-p(1))\cdot U'(v(k)\cdot W - p(k)\cdot L_c - L + t + L_c).
\end{multline}
Since we assumed that $U''(\cdot) < 0$, this means $U'(v(k)\cdot W - p(k)\cdot L_c - L - (k-1)\cdot t + L_c) > U'(v(k)\cdot W - p(k)\cdot L_c - L + t + L_c)$ since $v(k)\cdot W - p(k)\cdot L_c - L - (k-1)\cdot t + L_c < v(k)\cdot W - p(k)\cdot L_c - L + t + L_c$. Recalling that $p(k) > p(1)$ and $U'(\cdot) > 0$ by assumption, we thus have 
\begin{multline}
\partial_{t} H^k(t,L_c) \leq
    \big[-(k-1)\cdot p(1) + (p(k)-p(1))\big]\times\\
    U'(v(k)\cdot W - p(k)\cdot L_c - L - (k-1)\cdot t + L_c).
\end{multline}
Since we assumed that $p(k) < k\cdot p(1)$ and $U'(\cdot) > 0$, this means $\partial_{t} H^k(t,L_c) < 0$. Thus choosing $t^* = 0$ is optimal because we are constrained in (\ref{eqn:s2cdp}) to choose $t\in[0,1]$.
\end{proof}

The implication of the above result is that we can rewrite the contract design problem as
\begin{equation}
\label{eqn:s2cdp_2}
    \begin{aligned}
        \max_{s,L_c}\ & (1-s)\cdot H^1(L_c) + s\cdot H^k(L_c)\\
        \mathrm{s.t.}\         & L_c \geq 0 
    \end{aligned}
\end{equation}
where $H^1(\cdot)$ is as defined in (\ref{eqn:h1lc}), and 
\begin{multline}
    H^k(L_c) := H^k(0,L_c) = \\
    p(k)\cdot U(v(k)\cdot W - p(k)\cdot L_c - L + L_c) + \\
    (1-p(k))\cdot U(v(k)\cdot W - p(k)\cdot L_c).
\end{multline}
We next use the above reformulation to characterize the optimal insurance coverage payout.

\begin{proposition}
\label{prop:lcstar_2}
We have that $L_c^* = L$.
\end{proposition}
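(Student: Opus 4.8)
The plan is to mirror the first-order stationarity argument that was used earlier to establish $L_c^* = L - t^*$ in Scenario A, now applied to the reformulated problem (\ref{eqn:s2cdp_2}) in which $t^* = 0$ has already been fixed by Proposition~\ref{prop:tstar_2}. As in that proof, I would split into the two cases $s^* = 0$ and $s^* = 1$ and treat the relevant objective ($H^1$ or $H^k$) as a one-dimensional function of $L_c$, then impose the stationarity condition $\partial_{L_c} = 0$.

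First, consider $s^* = 0$, so the objective is $H^1(L_c)$. Here the first $U$-term has argument $W - L + (1-p(1))\cdot L_c$ (with $L_c$-derivative coefficient $1-p(1)$) and the second has argument $W - p(1)\cdot L_c$ (with coefficient $-p(1)$), so the stationarity condition factors as
\[
p(1)\cdot(1-p(1))\cdot\big[U'(W - L + (1-p(1))\cdot L_c) - U'(W - p(1)\cdot L_c)\big] = 0 .
\]
Since $p(1)\in(0,1)$, the leading factor is strictly positive, so the bracket must vanish. Because $U''(\cdot) < 0$, the derivative $U'$ is strictly decreasing and hence injective, which forces the two arguments to coincide; subtracting $W$ and combining the $L_c$ coefficients (the $p(1)\cdot L_c$ contributions cancel) reduces this to $-L + L_c = 0$, i.e.\ $L_c = L$. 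For $s^* = 1$ the objective is $H^k(L_c)$, and the identical computation with $W$ replaced by $v(k)\cdot W$ and $p(1)$ replaced by $p(k)$ again yields a bracketed difference of $U'$ at two arguments differing by exactly $L_c - L$, so injectivity of $U'$ delivers $L_c = L$ once more. In both cases the value $L_c = L$ is feasible since $L \geq 0$.

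The one point requiring a little care --- the main (though mild) obstacle --- is to confirm that the stationary point is genuinely the global maximizer rather than merely a critical point, and in particular that it is not dominated by the boundary $L_c = 0$. I would resolve this by noting that each of $H^1$ and $H^k$ is a nonnegative weighted sum of $U$ composed with affine maps of $L_c$, and is therefore concave in $L_c$; consequently the first-order condition is sufficient for a global maximum over $L_c \geq 0$. This concavity observation is exactly what upgrades the necessary stationarity condition into the claimed optimality $L_c^* = L$, completing the argument in both cases.
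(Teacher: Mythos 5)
Your proof is correct and follows essentially the same route as the paper's: the same case split on $s^*$, the same first-order stationarity condition, and the same use of $U''(\cdot)<0$ (injectivity of $U'$) to force the two arguments of $U'$ to coincide, yielding $L_c = L$. Your additional concavity observation---that $H^1$ and $H^k$ are nonnegative weighted sums of $U$ composed with affine functions of $L_c$, so the stationary point is a global maximizer over $L_c \geq 0$---is a welcome refinement that the paper's proof leaves implicit, but it does not change the underlying argument.
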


\begin{proof}
If $s^* = 0$, then the objective function of (\ref{eqn:s2cdp_2}) is $H^1(L_c)$. Next, note the first-order stationarity condition is 
\begin{multline}
    0 = \partial_{L_c} H^1(L_c) = \\
    p(1)\cdot (1-p(1))\cdot U'(W - p(1)\cdot L_c - L + L_c) +\\
    -p(1)\cdot (1-p(1))\cdot U'(W - p(1)\cdot L_c).
\end{multline}
Since we assumed that $U''(\cdot) < 0$, this means the above is satisfied when $W - p(1)\cdot L_c - L + L_c = W - p(1)\cdot L_c$. Hence at optimality we have $L_c^* = L$, which is feasible since $L >0$ implies $L_c^* \geq 0$. The proof for the case $s^* = 1$ proceeds almost identically.
\end{proof}


The implication of the above result is that we can rewrite the contract design problem as
\begin{equation}
\label{eqn:s2cdp_3}
    \begin{aligned}
        \max_{s}\ & (1-s)\cdot H^1 + s\cdot H^k\\
    \end{aligned}
\end{equation}
where
\begin{equation}
    H^1 := H^1(L) = U(W - p(1)\cdot L)
\end{equation}
and
\begin{equation}
    H^k := H^k(L)  = U(v(k)\cdot W - p(k)\cdot L).
\end{equation}
We conclude by using the above characterization to finish deriving an optimal contract for this scenario.

\begin{theorem}
\label{thm:con2}
If $W - p(1)\cdot L > v(k)\cdot W - p(k)\cdot L$, then an optimal contract is given by $(s^*,t^*,L_c^*) = (0, 0, L)$. If $W - p(1)\cdot L \leq v(k)\cdot W - p(k)\cdot L$, then an optimal contract is given by $(s^*,t^*,L_c^*) = (1, 0, L)$.
\end{theorem}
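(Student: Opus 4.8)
The plan is to lean on the two preceding propositions to collapse the problem to a single binary participation decision, and then resolve that decision by a monotonicity argument. Concretely, Proposition \ref{prop:tstar_2} pins down $t^* = 0$ and Proposition \ref{prop:lcstar_2} pins down $L_c^* = L$, so the contract design problem reduces to the scalar problem (\ref{eqn:s2cdp_3}) with objective $(1-s)\cdot H^1 + s\cdot H^k$, where $H^1 = U(W - p(1)\cdot L)$ and $H^k = U(v(k)\cdot W - p(k)\cdot L)$. This structural simplification is the bulk of the work, and it is already supplied by the earlier results, so here I need only optimize over the participation variable $s \in \{0,1\}$.

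First I would observe that the reduced objective is affine (indeed a convex combination) in $s$: it is a weighted average of the two constants $H^1$ and $H^k$. Hence the maximum over $s \in \{0,1\}$ is attained at $s^* = 0$ with value $H^1$ whenever $H^1 > H^k$, and at $s^* = 1$ with value $H^k$ whenever $H^1 \leq H^k$, with the boundary tie resolved in favor of participation to match the theorem's convention. (Even if one relaxed $s$ to lie in $[0,1]$, linearity forces the optimum to an endpoint, so this step is robust.)

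Next I would translate the comparison of $H^1$ and $H^k$ into a comparison of their arguments. Since $U(\cdot)$ is strictly increasing ($U'(\cdot) > 0$), we have $H^1 > H^k$ if and only if $W - p(1)\cdot L > v(k)\cdot W - p(k)\cdot L$, and $H^1 \leq H^k$ if and only if $W - p(1)\cdot L \leq v(k)\cdot W - p(k)\cdot L$, which is exactly the dichotomy in the statement. Combining each case with $t^* = 0$ and $L_c^* = L$ then yields the claimed optimal contracts $(s^*,t^*,L_c^*) = (0,0,L)$ and $(1,0,L)$, respectively. I do not anticipate a genuine obstacle, as the argument mirrors the concluding step of Theorem \ref{thm:con1}; the only point needing minor care is the equality case $H^1 = H^k$, where strict monotonicity of $U$ makes equality of utilities equivalent to equality of arguments, so assigning the tie to $s^* = 1$ is consistent with the stated contract.
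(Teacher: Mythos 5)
Your proof is correct and follows essentially the same route as the paper: reduce via Propositions \ref{prop:tstar_2} and \ref{prop:lcstar_2} to the binary choice in (\ref{eqn:s2cdp_3}), then use strict monotonicity of $U(\cdot)$ to translate the comparison of $H^1$ and $H^k$ into the stated comparison of their arguments. Your added remarks on linearity in $s$ and the tie-breaking at $H^1 = H^k$ are harmless elaborations of the same argument.
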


\begin{proof}
If $W - p(1)\cdot L > v(k)\cdot W - p(k)\cdot L$, then this means $H^1 = U(W - p(1)\cdot L) > H^k = U(v(k)\cdot W - p(k)\cdot L)$ since $U'(\cdot) > 0$. Hence, the optimal choice is $s^* = 0$. This means the choice $t^* = 0$ and $L_c^* = L$ is optimal by Propositions \ref{prop:tstar_2} and \ref{prop:lcstar_2}. If $W - p(1)\cdot L \leq v(k)\cdot W - p(k)\cdot L$, then this means $H^1 = U(W - p(1)\cdot L) \leq H^k = U(v(k)\cdot W - p(k)\cdot L)$ since $U'(\cdot) > 0$. Hence, the optimal choice is $s^* = 1$. This means the choice $t^* = 0$ and $L_c^* = L$ is optimal by Propositions \ref{prop:tstar_2} and \ref{prop:lcstar_2}.
\end{proof}

\subsection{Insights from the Optimal Contract}

Several insights can be gained from the optimal contract in Theorem \ref{thm:con2}. One interesting insight is that the optimal contract has $t^* = 0$, meaning that there is no penalty or fine in the event of a data breach, even when the health care providers participate in the consortium. This has an important practical implication, which is that participation in the data sharing consortium can only be encouraged by the ability of a health care provider to purchase an insurance policy from an external insurance company. Specifically, the optimal contract has $L_c^* = L$. This means $H^k = H^k(L) > H^k(0)$, or in words that purchasing insurance gives each participating health care provider a strictly higher utility than \emph{not} purchasing insurance. Restated, the ability to purchase insurance for the event of a data breach makes it more likely for a health care provider to be willing to share data. Furthermore, Fig. \ref{fig:fig2} shows that depending on the particular functional forms, there is often a maximum consortium size beyond which costs associated with the increased likelihood of data breaches exceeds the value of sharing more data.

\begin{figure}
    \centering
    \includegraphics{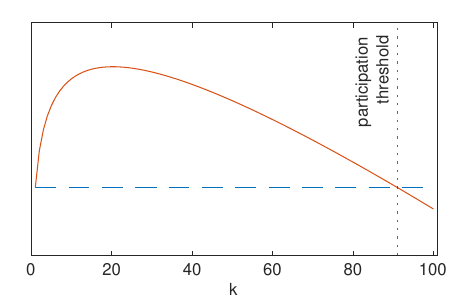}
    \caption{The red solid line is a health care provider's expected utility when they participate in the consortium, and the blue dashed line is the expected utility when they do not participate, where $k$ is the number of health care providers in the consortium. The black dotted line represents the participation threshold beyond which the risks outweigh the benefits of sharing data by participating in the consortium.}
    \label{fig:fig2}
\end{figure}

\section{Conclusion}
\label{sec:conclusion}

In this paper, we designed contracts that help to mitigate the risks associated with data sharing so as to encourage health care providers to share their medical data. We first studied a scenario where a single health care provider sells medical data to a technology firm that is interested in using the data to develop new artificial intelligence algorithms. We next studied a scenario where multiple health care providers share data with each other for the purpose of conducting scientific research and improving patient care. Both cases required managing a trade-off between the value of sharing data with the liabilities associated with data breaches. The key concepts towards managing the risks associated with data breaches were the ideas of imposing a fine or penalty and purchasing external insurance to mitigate liabilities in the event of a data breach. Our results suggest that it is possible to devise contracts that promote the sharing of medical data while preserving the integrity and privacy of the data. By implementing the correct incentives, it may be possible to overcome the barriers to data sharing and facilitate the use of health information for science, technology, and policy. \\




\bibliographystyle{IEEEtran}
\bibliography{cybersecurity}

\end{document}